\newtheorem{lemma}{Lemma}
\newtheorem{theorem}{Theorem}
\newtheorem{proposition}{Proposition}
\newtheorem{proof}{Proof}
\def\bg{{\boldsymbol{g}}}
\def\bn{{\boldsymbol{n}}}
\def\bp{{\boldsymbol{p}}}
\def\bx{{\boldsymbol{x}}}
\def\by{{\boldsymbol{y}}}
\def\bG{{\boldsymbol{G}}}
\def\bI{{\boldsymbol{I}}}
\begin{document}
\title{Uplink Pilot and Data Power Control for Single Cell Massive MIMO Systems with MRC}

\author{\IEEEauthorblockN{Hei Victor Cheng, Emil Bj\"ornson, and Erik G. Larsson}

\IEEEauthorblockA{Department of Electrical Engineering (ISY), Link\"oping University, Sweden\\
Email: \{hei.cheng, emil.bjornson, erik.g.larsson\}@liu.se}

\thanks{This work was supported by the Strategic Research Center, the Link\"oping University Center for Industrial Information Technology (CENIIT), and the EU FP7 Massive MIMO for Efficient Transmission (MAMMOET) project.}
}

\maketitle

\begin{abstract}
 This paper considers the jointly optimal pilot and data power allocation in single cell uplink massive MIMO systems. A closed form solution for the optimal length of the training interval is derived. Using the spectral efficiency (SE) as performance metric and setting a total energy budget per coherence interval the power control is formulated as optimization problems for two different objective functions: the minimum SE among the users and the sum SE. The optimal power control policy is found for the case of maximizing the minimum SE by converting it to a geometric program (GP). Since maximizing the sum SE is an $\mathrm{NP}$-hard problem, an efficient algorithm is developed for finding KKT (local maximum) points. Simulation results show the advantage of optimizing the power control over both pilot and data power, as compared to heuristic power control policies.
\end{abstract}
\section{Introduction}
Massive multiple-input-multiple-output (MIMO) have recently attracted a lot of attention\cite{LTEM2013,TM2010}. The idea of massive MIMO is to use a large amount of antennas at the base station (BS) to serve multiple users at the same time and frequency resource block. The ability to increase both spectral efficiency (SE) and energy efficiency makes it one of the key candidates for the 5G cellular networks. The analysis of the performance of massive MIMO is of vast importance and has been done in \cite{NLM2013} for uplink single cell systems and in \cite{HBD2013} for multi-cell systems. However the analysis is done with the assumption of equal power allocation among the users. Only a few papers has considered power control, however there has not been much optimization of the powers. In order to harvest all the benefits brought by the massive antenna arrays, power control among the users is necessary. This can be done by varying the power of different users to either increase the total system performance or provide services with certain fairness.

Power control in wireless cellular network has been an important problem for decades, dating back to point to point (P2P) wireless systems. Lots of efforts have been put into developing efficient algorithms for maximizing the system performance with different objectives. Due to the interference from other users the power control is usually hard to solve optimally, in particular NP-hardness was proven in \cite{LZ2008} for the objective of maximizing the sum performance. For practical use a reasonable approach is to develop suboptimal algorithms with affordable complexity while achieving an acceptable performance, as done for example in \cite{CTPNJ2007}.

Compared to power control in P2P systems, power control in massive MIMO networks is a relatively new topic. Accurate channel estimates are needed at the BS for carrying out coherent linear processing, e.g. uplink detection and downlink precoding. Due to the large number of antennas in massive MIMO the instantaneous channel knowledge, which is commonly assumed to be known perfectly in the power control literature, is hard to be obtained perfectly.  Therefore one needs to take into account both the pilot power and payload power, and hence optimal power control becomes even harder in massive MIMO. This brings a new challenge to designing algorithms for optimal power control to achieve different objectives. On the other hand, the channel hardening in massive MIMO makes it possible to do power control based on the large-scale fading rather than small-scale fading. Several work tried to tackle this hard problem. In \cite{YM2014} the authors optimize the data power for providing every user the same throughput in multi-cell massive MIMO. In \cite{GFGFA2014} power control is done to minimize the uplink power consumption under target SINR constraints.
However we are not aware of any work that jointly optimize the pilot and data power for massive MIMO.
The questions we want to answer in this paper are:
\begin{enumerate}
\item Is power control on the pilots needed for massive MIMO systems? If the answer is yes, how much can we gain from jointly optimizing the pilot power and payload power, as compared to always using full power?
\item What intuition can be obtained from the optimal power control? This includes the pilot length, and how the pilot and payload power depend on the channel quality.
\end{enumerate}
In this paper we provide partial answers to these questions in the case of single cell operation with maximum ratio combining (MRC) at the BS. This is done by formulating and solving the optimization problems and comparing the results with simple heuristic power control policies. Two commonly used performance objectives, namely max-min SE and sum SE optimization,  are investigated and efficient optimization algorithms are developed.

\section{System Model}\label{model}
We consider uplink single cell massive MIMO systems with $M$ antennas at the BS and $K$ single-antenna users. The $K$ users are assigned $K$ orthogonal pilot sequences of length $\tau_p$ for $K \leq \tau_p \leq T$, where $T$ is the length of the coherence interval in which the channels are assumed to be unchanged. Denote the Rayleigh fading channels between the BS and the users as $\bG\in \mathbb{C}^{M \times K}$, where the columns of $\bG$ have the distributions
\begin{equation}
\bg_{k}\sim CN(\mathbf{0},\beta_k\bI), ~k=1,2,\ldots, K,
\end{equation}
which is a circularly symmetric complex Gaussian random variable, and the variance $\beta_k$ represents the large-scale fading including path loss and shadowing. The large-scale fading coefficients are assumed to be known at the BS as they are varying slowly (in the scale of thousands of coherence intervals) and can be easily estimated. The schemes proposed in this paper only depends on the large-scale fading which make it feasible to optimize the power control online.

In each coherence interval, each user $k$ transmits an orthogonal pilot sequence with power $p_p^k$ for channel estimation. We assume that minimum mean squared error (MMSE) channel estimation is carried out at the BS to obtain the small-scale coefficients. This gives an MMSE estimate of the channel vector from user $k$ as
\begin{equation}
\hat{\bg}_k=\frac{\sqrt{\tau_p p_p^k \beta_k}}{1+\tau_p p_p^k \beta_k}\left(\sqrt{\tau_p p_p^k}\bg_k+\bn_p^k\right)
\end{equation}
where $\bn_p^k\sim CN(\bf{0},\bI)$ accounts for the additive noise during the training interval. The noise has been normalized to unit variance and the variance is absorbed into $p_p^k$. During the payload transmission interval, the BS received the signal:
\begin{equation}
\by=\sum_{k=1}^K \bg_k p_u^ks_k+\bn
\end{equation}
where $s_k$ is the unit variance information symbol from user $k$ and $\bn\sim CN(\bf{0},\bI)$ represents the noise during the payload transmission interval. The noise has been normalized to have unit variance and the variance is absorbed into the payload power $p_u^k$. The channel estimates are used for MRC detection of the payload, which corresponds to multiplying the $\by$ with $\hat{\bg}_k^H$ to detect the symbol $s_k$. To make fair comparison with the scheme with equal power allocations, we impose the following constraint on the total transmit energy over a coherence interval:
\begin{equation}\label{power}
\tau_p p_{p}^k+(T-\tau_p) p_u^k \leq E_{max},~k=1,\ldots,K
\end{equation}
where $E_{max}$ is the total energy budget for each user within one coherence interval. In previous work, $p_{p}^k$ and $p_{u}^k$ have been optimized separately or often not optimized at all in which the benefit of massive MIMO cannot be fully harvested. Therefore we consider the scenario where each user can choose freely how to allocate its energy budget on the pilots and payload. In \cite{NML2014} $p_p^k$ and $p_u^k$ are set equal for every user. The work \cite{YM2014} optimized the payload power to maximize the minimum throughput, which corresponds to fixing $p_p^k$ for every user and optimizing over the $p_u^k$. The work \cite{BLD2015} adopted inverse power control for the pilot power, which corresponds to setting $p_p^k=C/\beta_k$ with a normalization constant $C$. These previous work can all be included in our framework by setting different variables to be constant. Therefore our framework of power control is the most general and the algorithms we develop in this work can be applied to all the above scenarios.
\section{Achievable SE With MRC}\label{rate}
Since the exact ergodic capacity of the uplink multiuser channels with channel uncertainty is unknown, bounds on the achievable SE are often adopted as the performance metric in the massive MIMO literature. Here we develop a lower bound for arbitrary power control with the same methodology using the Jensen's inequality as in \cite{NLM2013}. This achievable SE for user $k$ is given by the following proposition.
\begin{proposition}
An achievable SE for user $k$ with power control on the pilot and payload power is given by
\begin{equation}
R_k=\left(1-\frac{\tau_p}{T}\right)\log_2(1+\text{SINR}_k)
\end{equation}
where the SINR of user $k$ is
\begin{equation}\label{sinr}
\text{SINR}_k=\frac{(M-1)p_u^k p_p^k\beta_k^2 \tau_p}{1+\sum_{j=1}^K \beta_j p_u^j +\tau_p \beta_k p_p^k+ \tau_p p_p^k \beta_k \sum_{j\neq k} \beta_j p_u^j}.
\end{equation}
\end{proposition}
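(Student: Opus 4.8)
The plan is to follow the worst-case-Gaussian-noise (``use-and-then-forget'') methodology together with the Jensen step of \cite{NLM2013}, but carried out for arbitrary per-user pilot and data powers. First I would form the MRC output $r_k = \hat\bg_k^H\by$ and substitute $\bg_k = \hat\bg_k + \tilde\bg_k$, where $\tilde\bg_k = \bg_k - \hat\bg_k$ is the MMSE estimation error. By the orthogonality principle $\hat\bg_k$ and $\tilde\bg_k$ are independent, with i.i.d.\ $CN(0,\gamma_k)$ and $CN(0,\beta_k-\gamma_k)$ entries respectively, where $\gamma_k = \tau_p p_p^k\beta_k^2/(1+\tau_p p_p^k\beta_k)$ is the per-entry variance of the estimate in the statement. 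This splits $r_k$ into a desired term $\sqrt{p_u^k}\,\|\hat\bg_k\|^2 s_k$ and three residuals: the self-interference $\sqrt{p_u^k}\,\hat\bg_k^H\tilde\bg_k s_k$, the multiuser interference $\sum_{j\neq k}\sqrt{p_u^j}\,\hat\bg_k^H\bg_j s_j$, and the filtered noise $\hat\bg_k^H\bn$.

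Next I would condition on the estimate $\hat\bg_k$, which the receiver knows and uses, and treat the three residuals as worst-case independent Gaussian noise. The key point is that, conditioned on $\hat\bg_k$, each residual has zero mean and is uncorrelated with $s_k$, so its conditional second moment simply adds to the effective noise power. A direct computation gives $E[|\hat\bg_k^H\tilde\bg_k|^2\mid\hat\bg_k] = \|\hat\bg_k\|^2(\beta_k-\gamma_k)$, $E[|\hat\bg_k^H\bg_j|^2\mid\hat\bg_k] = \|\hat\bg_k\|^2\beta_j$ for $j\neq k$ (using that $\bg_j$ is independent of $\hat\bg_k$ under orthogonal pilots), and $E[|\hat\bg_k^H\bn|^2\mid\hat\bg_k] = \|\hat\bg_k\|^2$. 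Dividing out the common factor $\|\hat\bg_k\|^2$, the conditional SINR is $\|\hat\bg_k\|^2 p_u^k / \big(1 + p_u^k(\beta_k-\gamma_k) + \sum_{j\neq k}p_u^j\beta_j\big)$, and the conditional achievable rate is the base-two logarithm of one plus this quantity.

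The closed form then follows by averaging over $\hat\bg_k$ and invoking Jensen's inequality. Since $\|\hat\bg_k\|^2$ appears only in the numerator, I would rewrite the rate as $\log_2(1 + 1/Y)$ with $Y \propto 1/\|\hat\bg_k\|^2$ and use that $y\mapsto \log_2(1+1/y)$ is convex to move the expectation inside, obtaining the lower bound $\log_2\big(1 + 1/E[1/\mathrm{SINR}_k(\hat\bg_k)]\big)$. The one genuinely distribution-dependent ingredient is $E[1/\|\hat\bg_k\|^2]$: because $\|\hat\bg_k\|^2$ is a sum of $M$ i.i.d.\ exponential variables it is Gamma distributed with shape $M$ and scale $\gamma_k$, so $E[1/\|\hat\bg_k\|^2] = 1/((M-1)\gamma_k)$. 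This is exactly where the factor $M-1$ in the numerator originates. Substituting this moment yields the effective SINR $(M-1)p_u^k\gamma_k/\big(1 + p_u^k(\beta_k-\gamma_k) + \sum_{j\neq k}p_u^j\beta_j\big)$.

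Finally I would substitute $\gamma_k = \tau_p p_p^k\beta_k^2/(1+\tau_p p_p^k\beta_k)$ and clear the denominator by multiplying through by $1+\tau_p p_p^k\beta_k$; using $\beta_k-\gamma_k = \beta_k/(1+\tau_p p_p^k\beta_k)$ this collapses exactly to the stated SINR, and multiplying by the prelog fraction $1-\tau_p/T$ (the share of the coherence block carrying data) gives $R_k$. The main obstacle I anticipate is not the algebra but justifying the two bounding steps cleanly: that treating the correlated-but-zero-conditional-mean residuals as worst-case Gaussian noise still yields a valid achievable rate, and that Jensen is applied in the direction preserving a lower bound. The moment identity $E[1/\|\hat\bg_k\|^2] = 1/((M-1)\gamma_k)$, valid for $M\geq 2$, is the crux that turns the ergodic expression into the claimed closed form.
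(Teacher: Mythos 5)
Your proposal is correct and follows essentially the same route as the paper, which only sketches its proof by citing the methodology of \cite{NLM2013}: treat all residual terms (estimation-error self-interference, multiuser interference, noise) as worst-case Gaussian noise conditioned on the MMSE estimate, then apply Jensen's inequality to $1/\mathrm{SINR}$, with the factor $M-1$ arising from $\mathrm{E}[1/\|\hat\bg_k\|^2]=1/((M-1)\gamma_k)$. Your algebra, including the substitution $\gamma_k=\tau_p p_p^k\beta_k^2/(1+\tau_p p_p^k\beta_k)$ and clearing the common factor $1+\tau_p p_p^k\beta_k$, reproduces the stated SINR exactly.
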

\begin{proof}
Using a similar approach as in \cite{NLM2013} by treating all additive interference as a worst-case Gaussian noise and then applying Jensen's inequality we can arrive at the result. The detailed proof is omitted here due to lack of space.
\end{proof}
This achievable SE is used as the user performance metric throughout the paper, where $\tau_p$, $p_p^k$ and $p_u^k$ are the variables to be optimized (for $k=1,\ldots,K$). The optimization can be done at the BS, which can then inform the users about the number of pilot symbols, the amount of power to be spent on training, and the amount of power to be spent on payload data. The aim is to maximize a given utility function $U(R_1,\ldots,R_K)$ where $U(\cdot)$ can be any function that is monotonically increasing in every argument. The utility function characterizes the performance and fairness that we provide to the users. Examples of commonly used utility function are the max-min fairness, sum performance, and proportional fairness. The general problem we are trying to solve is:
\begin{equation}\label{opt}
\begin{aligned}
& \underset{\tau_p,\{p_p^k\},\{p_u^k\}}{\text{maximize}}
& & U ~(R_1,\ldots,R_K)   \\
&~\text{subject to}
& & \tau_p p_{p}^k+(T-\tau_p) p_u^k \leq E_{max}, \forall k \\
& & & p_p^k\geq 0, p_u^k\geq0, \forall k, K\leq\tau_p\leq T.
\end{aligned}
\end{equation}
\section{Optimal Training Interval}\label{training}
In this section we derive the optimal length of the training interval in closed form. First we provide the following lemma:
\begin{lemma}\label{lemma1}
For any monotonically increasing utility function, the energy constraint \eqref{power} is satisfied with equality for every user at the optimal solution, i.e.
\begin{equation}
\tau_p p_{p}^k+(T-\tau_p) p_u^k = E_{max}, ~k=1,\ldots,K
\end{equation}
at the optimal point.
\end{lemma}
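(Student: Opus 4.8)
The plan is to argue by contradiction using a single-user exchange (perturbation) argument, exploiting a structural feature of the SINR expression \eqref{sinr}: the pilot power $p_p^k$ of user $k$ appears \emph{only} in $\text{SINR}_k$ and in no other user's SINR. Indeed, inspecting \eqref{sinr} for an index $j \neq k$, the only pilot power that enters $\text{SINR}_j$ is $p_p^j$, while the coupling across users occurs exclusively through the data powers $\{p_u^i\}$. Hence perturbing $p_p^k$ alone leaves every rate $R_j$ with $j \neq k$ completely unchanged, which removes the interference coupling that normally obstructs such exchange arguments.

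Suppose, for contradiction, that at an optimal solution the constraint \eqref{power} is slack for some user $k$, i.e. $\tau_p p_p^k + (T-\tau_p)p_u^k < E_{max}$. First I would establish that $\text{SINR}_k$ is strictly increasing in $p_p^k$ when $p_u^k > 0$. Fixing all other variables, \eqref{sinr} has the form $\text{SINR}_k = \frac{a\,p_p^k}{b + c\,p_p^k}$ with $a = (M-1)p_u^k \beta_k^2 \tau_p > 0$, $b = 1 + \sum_{j} \beta_j p_u^j > 0$, and $c = \tau_p \beta_k\bigl(1 + \sum_{j\neq k}\beta_j p_u^j\bigr) > 0$; its derivative $ab/(b + c\,p_p^k)^2$ is strictly positive, so the map is strictly increasing.

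Given the slack, I would increase $p_p^k$ by a sufficiently small $\delta > 0$ while keeping $p_u^k$ and all other variables fixed; for small $\delta$ the constraint \eqref{power} remains satisfied. By the monotonicity just shown, $\text{SINR}_k$ and hence $R_k$ strictly increase, while all $R_j$ with $j\neq k$ are unchanged by the decoupling observation. Since $U$ is monotonically increasing in each argument, $U(R_1,\ldots,R_K)$ strictly increases, contradicting optimality. Therefore the constraint must hold with equality for every user.

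The main point requiring care is the degenerate case $p_u^k = 0$, for which $\text{SINR}_k = 0$ regardless of $p_p^k$, so raising the pilot power yields no strict improvement. Here I would observe that one can still raise $p_p^k$ until \eqref{power} binds without altering any rate, so equality may be imposed without loss of optimality; alternatively one notes that $p_u^k=0$ gives $R_k=0$ and is never part of an optimum for the utilities of interest. This boundary case is the only subtlety; the core of the argument is the clean structural decoupling of the pilot powers across users, which is what makes the per-user perturbation legitimate.
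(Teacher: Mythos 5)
Your proposal follows essentially the same route as the paper's own proof: both rest on the observation that, because the pilots are orthogonal, $p_p^k$ enters only $\text{SINR}_k$ in \eqref{sinr}, and that $\text{SINR}_k$ is increasing in $p_p^k$, so a user with slack energy can raise its pilot power without affecting any other user. Your write-up is more explicit than the paper's (the rewriting $\text{SINR}_k = \frac{a\,p_p^k}{b+c\,p_p^k}$ with the derivative computation, and the treatment of the $p_u^k=0$ corner case, which the paper glosses over entirely).

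The one place where your framing is weaker than the paper's is the proof by contradiction: it requires $U$ to be \emph{strictly} increasing in each argument, since you need ``$U$ strictly increases'' to contradict optimality. The lemma is stated for any monotonically increasing utility and is applied in the paper to $U=\min_k R_k$, which is only non-decreasing in each coordinate: raising $R_k$ alone need not change the minimum, so no contradiction arises, and indeed for the max-min utility there exist optimal points at which some users' energy constraints are slack. The paper's proof sidesteps this by being constructive rather than contradictory: starting from any allocation with slack, each such user increases its pilot power until the constraint binds, yielding a point that is feasible, no worse (hence still optimal), and satisfies equality --- i.e., the lemma holds in the ``without loss of optimality'' sense. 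You already deploy exactly this constructive argument for the degenerate $p_u^k=0$ case; using it uniformly, instead of the contradiction, closes the gap and covers the max-min utility that the paper actually needs this lemma for.
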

\begin{proof}
Observe that we use orthogonal pilot sequences for each user and therefore the SINRs in \eqref{sinr} are monotonically increasing in $p_p^k$ for every user $k$. For any power allocation in which some users do not use the full energy budget, they can each increase their pilot power to improve their own SINR until their energy constraint is satisfied with equality, without causing interference to any other users.
\end{proof}

Then we can state the following theorem which gives the optimal length of training interval in closed form.
\begin{theorem}\label{theorem1}
For any monotonically increasing utility function $U(R_1,\ldots,R_k)$, the optimal $\tau_p$ equals to $K$.
\end{theorem}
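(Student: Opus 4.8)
The plan is to reduce the statement to a one-dimensional monotonicity claim by a change of variables that decouples $\tau_p$ from the SINR. First I would invoke Lemma~\ref{lemma1}, so that the energy budget holds with equality, and then introduce the \emph{effective pilot energy} $e_p^k = \tau_p p_p^k$ and the \emph{data energy} $e_u^k = (T-\tau_p)p_u^k$, turning the constraint into $e_p^k + e_u^k = E_{max}$. The point of this substitution is that, after writing $p_p^k = e_p^k/\tau_p$ and $p_u^k = e_u^k/(T-\tau_p)$ in \eqref{sinr} and clearing the common factor $(T-\tau_p)$, the SINR takes the form $\text{SINR}_k = A_k/(B_k s + C_k)$ with $s = T-\tau_p$, where $A_k = (M-1)\beta_k^2 e_u^k e_p^k$, $B_k = 1+\beta_k e_p^k$, and $C_k = \sum_j \beta_j e_u^j + \beta_k e_p^k \sum_{j\neq k}\beta_j e_u^j$ depend only on the energies $\{e_p^j,e_u^j\}$ and not on $\tau_p$ directly.

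Next I would argue by contradiction: take any optimal point with $\tau_p = \tau^\star > K$ and construct a competitor at $\tau_p = K$ that keeps every $e_p^k$ and $e_u^k$ unchanged, i.e. $p_p^k = e_p^k/K$ and $p_u^k = e_u^k/(T-K)$. This competitor is feasible because the budget $e_p^k + e_u^k = E_{max}$ is untouched, and each $A_k,B_k,C_k$ is identical to its value at $\tau^\star$; only $s$ changes, increasing from $T-\tau^\star$ to $T-K$. It therefore suffices to show that each rate, viewed as $R_k(s) = \frac{s}{T}\log_2\!\big(1 + A_k/(B_k s + C_k)\big)$, is strictly increasing in $s$ for $s>0$ whenever $\text{SINR}_k>0$; users with $e_p^k e_u^k = 0$ have $R_k \equiv 0$ and are unaffected.

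The crux is this monotonicity in $s$, where the competition between the longer payload phase (the prefactor $s/T$) and the reduced per-symbol power (a smaller SINR) must be resolved in favour of the former. Writing $x = \text{SINR}_k = A_k/(B_k s + C_k)$ and differentiating the rate (up to the positive constant $1/(T\ln 2)$), I expect to obtain, after simplification, $\frac{d}{ds}\big[s\ln(1+x)\big] = \ln(1+x) - \frac{x\,(1 - C_k/(B_k s + C_k))}{1+x}$. Since $0 \le C_k/(B_k s + C_k) < 1$, the subtracted term is at most $x/(1+x)$, and the elementary inequality $\ln(1+x) > x/(1+x)$ for $x>0$ forces the whole derivative to be strictly positive. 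I anticipate this derivative-sign computation to be the main obstacle, as it is the only place where the specific algebraic shape of \eqref{sinr} is essential; everything else is bookkeeping.

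Finally, strict monotonicity gives $R_k(T-K) \ge R_k(T-\tau^\star)$ for all $k$, with strict inequality for every user having positive SINR, so by the assumed monotonicity of $U$ the competitor strictly improves the objective, contradicting the optimality of $\tau^\star > K$. Hence the optimal training length is $\tau_p = K$.
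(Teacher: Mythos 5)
Your proof is correct, and it is in fact more complete than what the paper itself offers. The paper omits the detailed argument and gives only the hint that $g(x)=\log\left(1+\frac{a}{bx+c}\right)$ is strictly monotonically increasing in $x$; with positive constants that function is actually \emph{decreasing}, and the reading that makes it true (taking $x=\tau_p$ with energies fixed, so that the SINR denominator is $(B_kT+C_k)-B_k\tau_p$, i.e.\ $b<0$) concerns only the logarithmic factor. That alone cannot settle the theorem: with per-user energies held fixed, shrinking $\tau_p$ lengthens the data phase (the prefactor $s/T$ grows) while diluting the per-symbol data power (the SINR $A_k/(B_ks+C_k)$ shrinks), so the two factors of the rate move in opposite directions and the whole question is which effect wins. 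Your proof addresses exactly the right object, the product $s\ln\bigl(1+A_k/(B_ks+C_k)\bigr)$, and I verified both your reparametrization (the constants $A_k=(M-1)\beta_k^2e_u^ke_p^k$, $B_k=1+\beta_ke_p^k$, $C_k=\sum_j\beta_je_u^j+\beta_ke_p^k\sum_{j\neq k}\beta_je_u^j$ are correct) and your derivative identity
\begin{equation*}
\frac{d}{ds}\Bigl[s\ln(1+x)\Bigr]=\ln(1+x)-\frac{x\Bigl(1-\tfrac{C_k}{B_ks+C_k}\Bigr)}{1+x},\qquad x=\frac{A_k}{B_ks+C_k},
\end{equation*}
which together with the elementary bound $\ln(1+x)>x/(1+x)$ for $x>0$ gives strict positivity; this is the decisive step that the paper's hint does not (and cannot by itself) supply. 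Two minor remarks. First, Lemma~\ref{lemma1} is invoked but never really used: keeping $e_p^k,e_u^k$ fixed preserves feasibility of the budget whether or not it is tight, so your competitor construction works from any feasible point with $\tau_p>K$. Second, your closing contradiction needs at least one user with strictly positive SINR (and $U$ strictly increasing in that argument); in the degenerate case where every user has zero SINR the competitor is only weakly better, but such a point cannot be optimal in the first place (any positive power split at $\tau_p=K$ dominates it), so the conclusion is unaffected.
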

\begin{proof}

We prove this by applying that the function $g(x)=\log\left(1+\frac{a}{bx+c}\right)$ is a strictly monotonic increasing function in $x$.  The
detailed proof is omitted here due to lack of space.
\end{proof}
Using the result in Theorem \ref{theorem1}, we can reduce the number of variables involved in the optimization and this enable us to find the optimal solutions for certain utility functions in the next section. Also from Theorem \ref{theorem1} we know that the optimal training period $\tau_p$ is equal to the number of users being served, and is the same for every user. Therefore there is no need for assigning pilot sequences of different length for different users.
\section{Joint Power Control of Pilots and Payload}\label{problem}
In this section we focus on solving the power control problem \eqref{opt} for two different utilities, namely max-min fairness and the sum performance, while other utilities are left for future work. These are the two extreme cases: totally fair and ignoring fairness to achieve high total throughput.
\subsection{Maximize the Minimum SE}
In the max-min fairness which aim at serving every user in the cell with equal SE. This is corresponding to choosing $U(R_1,\ldots,R_K)=\min_k R_k$. By using Theorem \ref{theorem1}, \eqref{opt} now becomes the following optimization problem:
\begin{equation}
\begin{aligned}
& \underset{\{p_p^k\},~\{p_u^k\}}{\text{maximize}}
& & \underset{k}{\text{min}} ~R_k   \\
& \text{subject to}
& & \tau_p p_{p}^k+(T-\tau_p) p_u^k \leq E_{max}, \forall k \\
& & & p_p^k\geq 0, p_u^k\geq0, \forall k.
\end{aligned}
\end{equation}
Since $\log(1+x)$ is an increasing function of $x$, we can remove the logarithm in the objective and use the epigraph form:

\begin{equation}\label{maxmin}
\begin{aligned}
& \underset{\{p_p^k\},\{p_u^k\},~\lambda}{\text{maximize}}
& &  \lambda  \\
& ~\text{subject to}
& & (M-1)p_u^k p_p^k \beta_k^2 \tau_p \geq\\
&&& \lambda(1+\sum_{j=1}^K \beta_j p_u^j +\tau_p \beta_k p_p^k+ \\
&&&\tau_p p_p^k \beta_k \sum_{j\neq k} \beta_j p_u^j), \forall k\\
& & & \tau_p p_{p}^k+(T-\tau_p) p_u^k \leq E_{max}, \forall k \\
& & & p_p^k\geq 0, p_u^k\geq0, \forall k.
\end{aligned}
\end{equation}

This problem is non-convex as it is formulated here, however it is a geometric program (GP). Since the objective function is a monomial and the constraints are valid posynomial, this can be solved efficiently with any GP solvers, here we use the MOSEK solver \cite{MOSEK} with CVX \cite{CVX}. Alternatively, \eqref{maxmin} can be turned into a convex optimization problem by the change of variable: $y_i=\log x_i$ for every variable $x_i$ (in this case $p_p^k, p_u^k,\lambda$). Then any function in the form $m(\bx)=c\prod_i x_i^{a_i}$ with $c>0$, which is defined as a monomial, becomes an exponential of an affine function. Sum of monomials, defined as the posynomial, becomes sum of exponentials of affine functions. Finally taking logarithm of every objective and constraint function we obtained a convex problem.
\subsection{Maximize the Sum SE}
In this part, we aim at maximizing the sum SE by choosing $U(R_1,\ldots,R_K)=\sum_{k=1}^K R_k$. By using Theorem \ref{theorem1}, \eqref{opt} now becomes the following optimization problem:
\begin{equation}\label{sumrate}
\begin{aligned}
& \underset{\{p_p^k\},~\{p_u^k\}}{\text{maximize}}
& &  \sum_k\log(1+\text{SINR}_k)   \\
& \text{subject to}
& & \tau_p p_{p}^k+(T-\tau_p) p_u^k \leq E_{max}, \forall k \\
& & & p_p^k\geq 0, p_u^k\geq0, \forall k.
\end{aligned}
\end{equation}
Power control to maximize sum performance is known to be an $\mathrm{NP}$-hard problem, even under perfect channel knowledge. Therefore in this paper we aim at finding a local optimal solution with affordable computational complexity. Here we adopt a successive convex optimization approach to converge to a Karush-Kuhn-Tucker (KKT) point of \eqref{sumrate}\cite{MW1978}.
We first reformulate the problem using the epigraph form as
\begin{equation}\label{sumrate2}
\begin{aligned}
& \underset{\{p_p^k\},\{p_u^k\},~\{\lambda_k\}}{\text{maximize}}
& &  ~\prod_k \lambda_k   \\
& ~~~\text{subject to}
& & \tau_p p_{p}^k+(T-\tau_p) p_u^k \leq E_{max}, \forall k \\
& & & 1+\text{SINR}_k \geq \lambda_k, \forall k \\
& & & p_p^k\geq 0, p_u^k\geq0, \forall k.
\end{aligned}
\end{equation}
In this form it is clear that the only non-convexity lies in the constraints of the SINRs. To deal with these SINR constraints, we construct a family of functions $f_i(\mathbf{p})$ in each iteration $i$ to approximate $f(\bp_k)=1+\text{SINR}_k$ where we denote $\bp_k=(p_p^k,p_u^k,\lambda_k)$. This has to be done for every user $k$ and the functions need to satisfy the conditions given in the following lemma from \cite{MW1978}:
\begin{lemma}\label{leamm}
By constructing a family of functions satisfying the following conditions:
\begin{enumerate}
\item $f(\mathbf{p}_k)\leq f_i(\mathbf{p}_k)$, $\forall \mathbf{p}_k$ in the feasible set,
\item $f(\mathbf{p}_k^{(i-1)})=f_i(\mathbf{p}_k^{(i-1)})$, where $\mathbf{p}_k^{(i-1)}$ is the solution from the previous iteration,
\item $\nabla f(\mathbf{p}_k^{(i-1)})=\nabla f_i(\mathbf{p}_k^{(i-1)})$,
\end{enumerate}
and optimize the problem by replacing $f(\mathbf{p}_k)$ with $f_i(\mathbf{p}_k)$ in the $i$-th iteration, the series of the solution will converge to an KKT point of the original problem.
\end{lemma}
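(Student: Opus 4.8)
The plan is to follow the classical convergence argument for the successive (inner) convex approximation method of the cited reference, resting on three pillars: feasibility transfer between consecutive subproblems, monotonic improvement of the objective, and a limiting argument that identifies every accumulation point as a KKT point of \eqref{sumrate2}. Throughout, write $P_i$ for the convex subproblem obtained from \eqref{sumrate2} by replacing each $f(\mathbf{p}_k)=1+\text{SINR}_k$ in the SINR constraints with its surrogate $f_i(\mathbf{p}_k)$, which is built around the previous iterate $\mathbf{p}_k^{(i-1)}$, and let $\mathbf{p}^{(i)}$ denote an optimal solution of $P_i$.

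First I would show that the scheme is well posed and improves monotonically. By condition~2 the surrogate agrees with $f$ at $\mathbf{p}_k^{(i-1)}$, and since the constraints of $P_{i-1}$ are active at its optimum ($\lambda_k^{(i-1)}=f_{i-1}(\mathbf{p}_k^{(i-1)})=f(\mathbf{p}_k^{(i-1)})$), the point $\mathbf{p}^{(i-1)}$ satisfies $f_i(\mathbf{p}_k^{(i-1)})=\lambda_k^{(i-1)}$ and is thus feasible for $P_i$; hence each subproblem is solvable and its value obeys $\prod_k\lambda_k^{(i)}\ge\prod_k\lambda_k^{(i-1)}$. The bounding property of condition~1 is what ties this surrogate improvement to the genuine objective: it makes each $P_i$ a convex \emph{restriction} of \eqref{sumrate2}, so that every $\mathbf{p}^{(i)}$ is feasible for the original nonconvex problem and the true objective $\prod_k(1+\text{SINR}_k)$ is itself nondecreasing along the iterates.

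Next I would pass to the limit. The energy constraints $\tau_p p_p^k+(T-\tau_p)p_u^k\le E_{max}$ together with nonnegativity render the feasible set compact, so the objective is bounded above, the nondecreasing values $\prod_k\lambda_k^{(i)}$ converge, and some subsequence satisfies $\mathbf{p}^{(i)}\to\mathbf{p}^\star$. At such an accumulation point the surrogate used in the subproblem is anchored at $\mathbf{p}^\star$ itself, so $\mathbf{p}^\star$ is a fixed point of the iteration map and solves the convex problem whose surrogates are linearised at $\mathbf{p}^\star$. Writing out the KKT system of that convex problem and invoking conditions~2 and~3, namely $f_i(\mathbf{p}^\star)=f(\mathbf{p}^\star)$ and $\nabla f_i(\mathbf{p}^\star)=\nabla f(\mathbf{p}^\star)$, the constraint values (hence the active set) and every gradient entering the stationarity equations coincide with those of \eqref{sumrate2}. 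The subproblem KKT conditions therefore reduce, at $\mathbf{p}^\star$, to the KKT conditions of the original problem, and since condition~1 keeps $\mathbf{p}^\star$ feasible for \eqref{sumrate2}, the point is a genuine KKT point.

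The step I expect to be the main obstacle is the rigorous limiting argument rather than the monotonicity bookkeeping. One must verify a constraint qualification (Slater's or the Mangasarian--Fromovitz condition) at the accumulation point so that the KKT system is necessary and the associated multiplier sequences remain bounded, and one must check that the surrogate gradients depend continuously on the anchor point so that the stationarity equations of $P_i$ converge, along the subsequence, to those anchored at $\mathbf{p}^\star$. Closing the gap between an arbitrary accumulation point and a true fixed point --- ensuring the subproblem solution cannot drift by a nonvanishing amount between consecutive iterates --- is the delicate part, and it is precisely here that the exact value and first-order matching of conditions~2 and~3 do the essential work.
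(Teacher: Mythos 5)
Your proposal is correct and follows essentially the same route as the paper: the paper gives no formal proof of this lemma (it is imported from \cite{MW1978}) but justifies it with exactly your outline --- condition 1 makes each subproblem an inner restriction so that iterates stay feasible for \eqref{sumrate2}, condition 2 transfers feasibility of the previous iterate to the current subproblem and hence gives monotone improvement of the objective, conditions 2 and 3 together force the KKT conditions of the original problem to hold at convergence, and boundedness plus monotonicity guarantees convergence of the objective values. Your extra care about constraint qualification, multiplier boundedness, and identifying accumulation points as fixed points goes beyond the paper's one-paragraph justification (these technicalities are handled in the cited reference), but it does not constitute a different approach.
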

The first condition is to ensure that the solution we get is feasible for the original problem. The second condition ensures that the solution from the previous iteration is feasible for the current iteration. As a result the objective value of the original problem increases in every iteration since the solution from the previous iteration is a feasible point to the problem in the current iteration. The second and third conditions together guarantee that the KKT conditions for the original problem are satisfied at convergence. As the objective value is bounded from above and monotonically increasing in every iteration, convergence is guaranteed.

To construct the family of function $f(\bx)$ we need the following lemma from \cite{CTPNJ2007}:
\begin{lemma}\label{lemma4}
For any posynomial $g(\bx)=\sum_i m_i(\bx)$, it holds for any $\alpha_i$ that
\begin{equation}
g(\bx)\geq \tilde{g} (\bx)=\prod_i\left(\frac{m_i(\bx)}{\alpha_i}\right)^{\alpha_i}
\end{equation}
\end{lemma}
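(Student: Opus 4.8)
The plan is to recognize the claim as a one-line consequence of the weighted arithmetic--geometric mean (AM--GM) inequality. Before invoking it, I would make explicit the normalization that the statement implicitly assumes: $\alpha_i \geq 0$ and $\sum_i \alpha_i = 1$. Without this constraint the right-hand side is not even dimensionally comparable to $g(\bx)$, so it must be part of the hypothesis. Since each $m_i(\bx)$ is a monomial, it is strictly positive on the positive orthant on which the GP variables live, and hence every quotient and power written below is well defined.

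First I would state weighted AM--GM in the form: for nonnegative reals $y_i$ and weights $\alpha_i \geq 0$ with $\sum_i \alpha_i = 1$, one has $\sum_i \alpha_i y_i \geq \prod_i y_i^{\alpha_i}$. The entire proof is then the substitution $y_i = m_i(\bx)/\alpha_i$. The left-hand side telescopes to $\sum_i \alpha_i \cdot m_i(\bx)/\alpha_i = \sum_i m_i(\bx) = g(\bx)$, and the right-hand side is by definition $\prod_i (m_i(\bx)/\alpha_i)^{\alpha_i} = \tilde{g}(\bx)$. This gives $g(\bx) \geq \tilde{g}(\bx)$ at once, for every admissible weight vector.

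There is therefore no genuine analytic obstacle in the inequality itself; the real content, and the reason the lemma is stated, is the freedom to choose the weights so that the monomial lower bound $\tilde{g}$ is useful in the successive convex approximation of Lemma~\ref{leamm}. I would close by recording the canonical anchoring choice $\alpha_i = m_i(\bp_k^{(i-1)})/g(\bp_k^{(i-1)})$ evaluated at the previous iterate. These weights are automatically nonnegative and sum to one, they force the equality $\tilde{g}(\bp_k^{(i-1)}) = g(\bp_k^{(i-1)})$ required by condition~2 of Lemma~\ref{leamm}, and a short differentiation verifies that the gradients of $g$ and $\tilde{g}$ agree there as well (condition~3), while the AM--GM bound just proved supplies the global under-estimate (condition~1). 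Thus the main effort lies not in establishing Lemma~\ref{lemma4}, but in observing that it yields a \emph{tight} monomial under-approximant of the posynomial $1+\text{SINR}_k$ exactly when the weights are pinned to the current iterate.
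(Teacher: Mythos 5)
Your proof is correct and matches the standard argument: the paper itself offers no proof of this lemma (it quotes it from \cite{CTPNJ2007}), and the weighted AM--GM substitution $y_i = m_i(\bx)/\alpha_i$ is exactly how that reference establishes it, so there is nothing genuinely different to compare. Your added hypothesis $\alpha_i \geq 0$, $\sum_i \alpha_i = 1$ is indeed necessary --- the statement ``for any $\alpha_i$'' is false as written (take a single monomial $g = m_1 = 1$ and $\alpha_1 = 1/2$, giving $\tilde{g} = \sqrt{2} > g$) --- and noting this, together with the anchoring choice $\alpha_i = m_i(\bx_0)/g(\bx_0)$ that ties the lemma to the conditions of Lemma~\ref{leamm}, is a genuine improvement on the paper's presentation.
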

The SINR constraints in \eqref{sumrate2} are in the form $h(\bx)/g(\bx)\leq 1$, which is not a valid posynomial constraint. We apply Lemma \ref{lemma4} on the denominator to replace $g(\bx)$ with $\tilde{g} (\bx)$ make it a valid posynomial constraint. Moreover with $\alpha_i$ chosen as $\alpha_i=m_i(\bx_0)/g(\bx_0)$, the three conditions in Lemma \ref{leamm} are satisfied. Doing this for every SINR constraint we get a convex approximation of problem \eqref{sumrate2}. The same procedures are repeated until convergence. To conclude, we obtain a KKT point to \eqref{sumrate2} with the procedure described in Algorithm \ref{SCO}.
\alglanguage{pascal}
\begin{algorithm}
 \begin{algorithmic}[1]
 \State choose $\mathbf{p}_k^{(0)}$ as the solution of max-min problem satisfying the constraints and initialize $i=1$
 \Repeat
 \State form the $i$-th approximated problem of \eqref{sumrate2} by approximating every SINR constraints using lemma \ref{lemma4},
 \State solve the $i$-th approximated problem to obtain $\mathbf{p}_k^{(i)}$ for every user $k$,
 \State $i \gets i+1$
 \Until \rm{convergence}
 \State \textbf{return} all $\mathbf{p}_k^{(i)}$
 \end{algorithmic}
 \caption{Successive convex optimization for problem \eqref{sumrate2}}
 \label{SCO}
\end{algorithm}
\section{Simulation Results}\label{simulation}
In this section we present simulation results to demonstrate the benefits of our algorithms and compare the performance with the case of no power control as well as the case of power control on the payload power only. There are $5$ schemes we are comparing here: 1) the solution to problem \eqref{maxmin} (marked as `max-min' in the figures), 2) Algorithm \ref{SCO} for problem \eqref{sumrate2} (marked as `sum' in the figures), 3) equal power allocation $p_u^k=p_p^k=E_{max}/T$ (marked as `no control' in the figures), 4) optimizing only payload power for problem \eqref{maxmin} by fixing $p_p^k=E_{max}/T$ (marked as `max-min (data only)' in the figures), 5) optimizing payload power only for problem \eqref{sumrate2} using Algorithm \ref{SCO} by fixing $p_p^k=E_{max}/T$ (marked as `sum (data only)' in the figures). We consider a scenario with $M=100$ antennas, $K=10$ users, and the length of the coherence interval is $T=200$ (which for example corresponds to a coherence bandwidth of $200$ kHz and a coherence time of $1$ ms). The users are assumed to be uniformly and randomly distributed in a cell with radius $R=500$ m and no user is closer to the BS than $100$ m. The path-loss model is chosen as $\beta_k=1/r_k^{3.76}$ where $r_k$ is the distance of user $k$ from the BS. The constant $E_{max}=0.1\times R^{3.76} \times T$ to get a signal-to-noise ratio (SNR) of $-10$ dB at the cell edge when using equal power allocation. Hence the users that are closer to the BS have higher SNR. The algorithms are run for $1000$ Monte-Carlo simulations where in each snapshot the users are dropped randomly in the cell so that the large-scale fading $\beta_k$ changes.

First we consider the cumulative distribution (CDF) of the sum SE. In Figure \ref{sumratelowsnr} we plot the CDF of the sum SE for the scenario we described. We observe the optimized power control increases the sum SE significantly. The whole CDF is shifted to the right by almost $8$ bit/s/Hz with the proposed power control on both the pilot and data as compared to equal power allocation. For example at the $0.95$-likely point, power control on the data only increases the sum SE by around $45\%$, power control over both pilot and data contributes to another $20\%$ increases as compared to equal power allocation. Surprisingly even the max-min formulation, which is designed for providing fairness, increases the sum SE by around $35\%$ comparing to equal power allocation.

In Figure \ref{minratelowsnr} we plot the CDF of the minimum SE over different snapshots of user locations. We observe that without any power control in half of the cases the user with the lowest SNR will get less than $0.5$ bit/s/Hz. This is not acceptable if we want to provide decent quality of service to every user being served. With max-min power control for both pilot and data we resolve this problem by guaranteeing every user a SE of more than $2$ bit/s/Hz. Moreover the $0.95$-likely point is increased by 9 times with power control on the data only, and 10 times with power control on both pilot and data with respect to equal power allocation. The performance of the sum SE formulation is rather surprising. By optimizing the pilot and data power for the sum SE also increases the minimum SE by 6 times in the $0.95$-likely point and guarantee each user to have more than $1$ bit/s/Hz.

Finally in Figure \ref{peruserratelowsnr} we plot the CDF of the per user SE over different snapshots of user locations. We observe that without power control the SE of the users varies from almost $0$ bit/s/Hz, which is not acceptable, to $10$ bit/s/Hz which is probably wasted in practice due to the limited modulation size. With sum SE power control the problem is less serious as the SE only ranges from $1$ b/s/Hz to $3.8$ b/s/Hz. With the max-min power control we provide almost the same SINR in every snapshot, since the SE of each user is very concentrated at its median point of $2.2$ bit/s/Hz.

In all the figures we observe that our joint power control over both pilots and data behaves similarly to power control over only the data. Nevertheless, our joint optimization increases the benefits even further. The extra gain can be up to $30\%$ if we look at the max-min formulation. The gain of the sum SE is smaller, but can still be up to $20\%$. This shows if we need to optimize the power allocation, power control over both pilots and data is the right approach.

Some additional insights on how the optimal power allocation is can be obtained from the solutions. For the max-min SE formulation, users that are closer to the BS spend more power on the pilots and users that are further away spend more power on the payload. This is due to the fairness property of the max-min formulation, which reduces the interference caused from the users that are closer to the BS in order to reduce the near-far effect. For the sum SE formulation, both users that are very close and further than certain distance threshold $d_{th}$ spend more power on pilots. This is because we require a more accurate channel estimate from the users that are very far and reduce the interference from the users that are very close to the others. This threshold depends on all $\beta_k$ and $E_{max}$. More in depth characterization of $d_{th}$ is left for future work.

\begin{figure}
\includegraphics[width=0.5\textwidth]{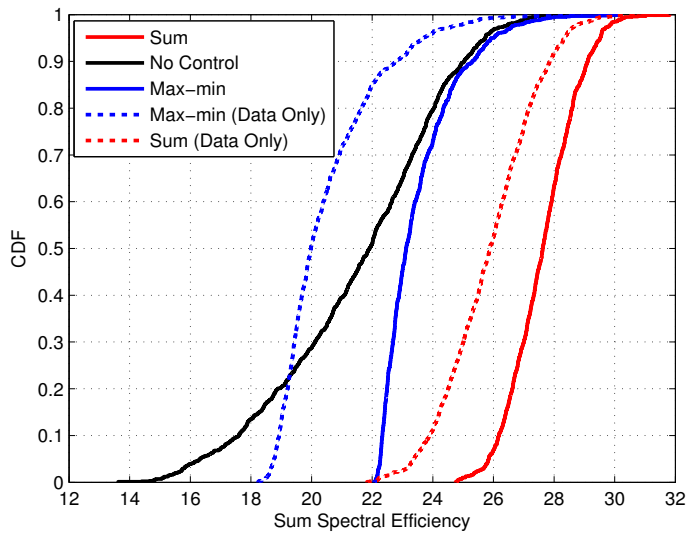}
\caption {\label{sumratelowsnr} CDF of the sum SE with $M=100$, $K=10$, $T=200$, $R=500$m and cell edge SNR of $-10$ dB.}
\end{figure}
\begin{figure}
\includegraphics[width=0.5\textwidth]{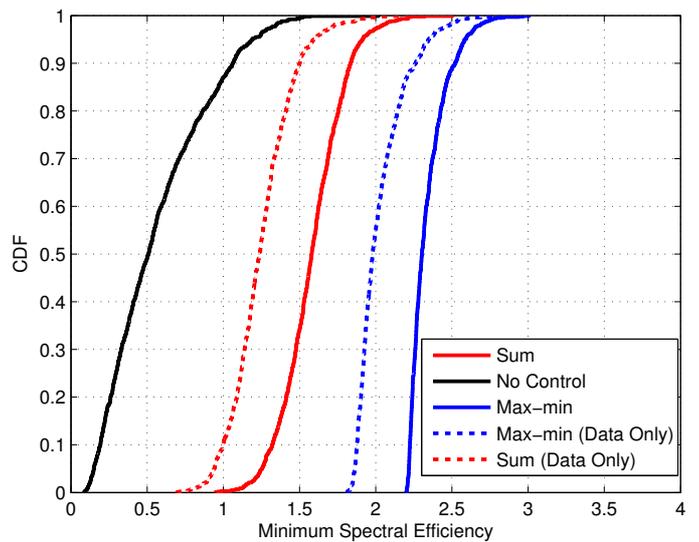}
\caption {\label{minratelowsnr} CDF of the minimum SE with $M=100$, $K=10$ and $T=200$, $R=500$m and cell edge SNR of $-10$ dB.}
\end{figure}
\begin{figure}
\includegraphics[width=0.5\textwidth]{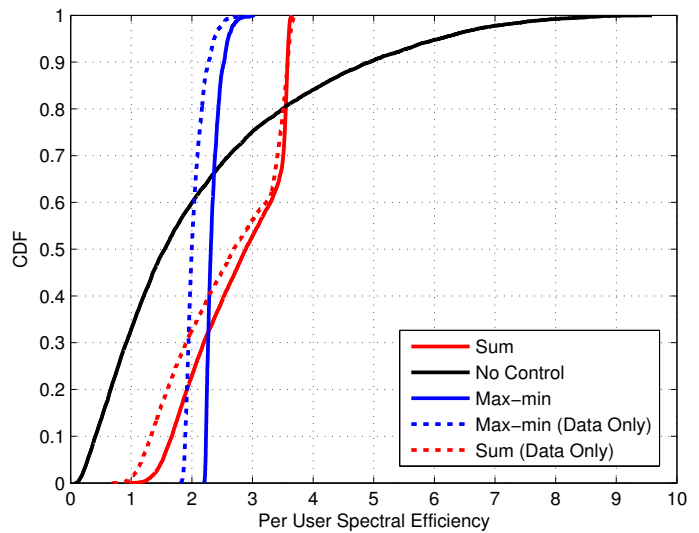}
\caption {\label{peruserratelowsnr} CDF of the per user SE with $M=100$, $K=10$ and $T=200$, $R=500$m and cell edge SNR of $-10$ dB.}
\end{figure}
\section{Conclusion and Future Work}\label{conclusion}
We considered the optimal joint pilot and data power allocation problems in single cell uplink massive MIMO systems with MRC detection. A closed form solution for the optimal length of training interval was first derived. Using the SE as performance metric and setting a total energy budget, the power control was formulated as optimization problems for two different objective functions: the minimum SE and the sum SE. The optimal power control policy was found for the case of maximizing the minimum SE by converting it to a GP. Since maximizing the sum SE is an $\mathrm{NP}$-hard problem, an efficient suboptimal algorithm was developed for finding KKT (local maximum) points. Simulation results showed the advantage of joint optimization over both pilot and data power. The gain over power control on data only can be up to $30\%$ for the minimum SE and $20\%$ for the sum SE.
Future work includes extension to the multi-cell systems and taking into account other detection methods.

\end{document}